\documentclass[11pt]{article}
\usepackage{nicefrac}
\usepackage{amsfonts}
\usepackage{amssymb}
\usepackage{color}
\usepackage{amsmath,mathrsfs,dsfont}
\usepackage{graphicx}
\usepackage{appendix}
\usepackage{cite}
\textheight 8.5in
\textwidth 6.5in
\evensidemargin 0.0in
\oddsidemargin 0.0in
\newtheorem{theorem}{Theorem}

\newtheorem{definition}[theorem]{Definition}

\newtheorem{proposition}[theorem]{Proposition}

\newenvironment{proof}[1][Proof]{\noindent\textbf{#1.} }{\ \rule{0.5em}{0.5em}}

\makeatletter
\let\@fnsymbol\@arabic
\makeatother

\begin{document}

\title{Heat kernel for flat generalized Laplacians with anisotropic scaling.}
\author{A. Mamiya\thanks{arthurmamiya@gmail.com}\ \  and A. Pinzul\thanks{apinzul@unb.br}\\
\\
\emph{Universidade de Bras\'{\i}lia}\\
\emph{Instituto de F\'{\i}sica}\\
\emph{70910-900, Bras\'{\i}lia, DF, Brasil}\\
\emph{and}\\
\emph{{International Center of Condensed Matter Physics} }\\
\emph{C.P. 04667, Brasilia, DF, Brazil} \\
}
\date{}
\maketitle

\begin{abstract}
We calculate the closed analytic form of the solution of heat kernel equation for the anisotropic generalizations of flat Laplacian. We consider a UV as well as UV/IR interpolating generalizations. In all cases, the result can be expressed in terms of Fox-Wright psi-functions. We perform different consistency checks, analytically reproducing some of the previous numerical or qualitative results, such as spectral dimension flow. Our study should be considered as a first step towards the construction of a heat kernel for curved Ho\v{r}ava-Lifshitz geometries, which is an essential ingredient in the spectral action approach to the construction of the Ho\v{r}ava-Lifshitz gravity.
\end{abstract}

\section{Introduction}
Recently, space-time with anisotropic scaling has become a popular subject in theoretical and mathematical physics. Largely, this is due to the influential paper by Ho\v{r}ava \cite{Horava:2009uw} where a candidate for the possible UV completion of Quantum Gravity was suggested. This work is a non-trivial development of Lifshitz' earlier idea \cite{Lifshitz} (Hence the common name for this type of models - Ho\v{r}ava-Lifshitz (HL) models). Since then the area has attracted considerable attention. The developments have been along several different directions: study of the possible experimental consequences, theoretical study of the structure of the model, different generalizations etc. For some recent reviews, see \cite{Padilla:2010ge,Sotiriou:2010wn,Mukohyama:2010xz,Clifton:2011jh}.

Our interest in HL model lies in the detailed study of this model from the point of view of spectral (or non-commutative geometry) \cite{Connes:1994yd}. The possible advantages of this approach were discussed in \cite{Pinzul:2010ct} where one of the first steps in this direction had been made. Namely, in \cite{Pinzul:2010ct} we calculated the spectral dimension of the HL model of arbitrary curvature using the methods of spectral geometry. The result was in complete agreement with the earlier calculation for the flat case \cite{Horava:2009if}. To perform this analysis, we introduced a notion of a generalized Laplacian, which is an appropriate generalization of the usual Laplacian on a manifold. This modification is needed to incorporate the information about the anisotropic scaling. This generalized Laplacian (or its counterpart in the form of a generalized Dirac operator) is one of the main ingredients in the spectral approach to the HL gravity. This is due to a couple of reasons:

1) It is well known that the information about the usual commutative geometry can be extracted from the so-called spectral triple \cite{Connes:1994yd,Connes:2008vs}. Dirac operator is an essential component of this triple (the other two being the algebra of functions and the Hilbert space of spinors). In the same way, we can construct generalized geometries by using the generalized spectral triples. In the case of the HL models, what is really modified is \textit{physical} geometry, i.e. the geometry as seen by the physical objects (fields etc.), see the discussion in \cite{Pinzul:2010ct,Gregory:2012an}. The generalized Dirac or Laplace operators encode the information about this physical geometry: a) the space-time is a foliated manifold (as in ADM approach); b) the time scales differently from the space.

2) At the same time, the Dirac operator contains the information about the dynamics of both geometry and matter. Essentially, this is the main idea behind the spectral action principle \cite{Chamseddine:1996zu}.
The great advantage of this approach is in the fact that the same object (Dirac operator) controls both geometric and matter part of the action. And we see this as a very promising for applications to HL models where the action contains huge ambiguity. The success of the spectral action principle applied to the Standard Model \cite{Chamseddine:2008zj} is very encouraging.

While in \cite{Pinzul:2010ct} the calculation was motivated by the first type of the reasonings - we calculated some geometrical characteristics of the typical HL space-time (spectral dimension), here we would like to make some initial steps in the other direction. The knowledge of the heat kernel of the Dirac operator is essential for the calculation of the geometric part of the spectral action \cite{Chamseddine:1996zu}. It is reasonable to start with the heat kernel for the flat generalizations of Laplacian (as we will see, already this is not a completely trivial task). The result is not just of some methodological interest. Rather it should serve as a starting point for the analysis of the general curved case, which is currently under study. The final outcome of this line of research should be purely algebraic derivation of the action for the HL gravity via the spectral action principle. This will be the focus of our future research.

The plan of the paper is the following. In Section \ref{1}, we study in details the case of a UV generalization of Laplacian. Section \ref{2} is devoted to thorough analysis of the general flat case. Multiple analytic results are obtained in complete agreement with previous studies as well as with physical intuition. In Conclusions we discuss the importance of the obtained results and outline the future directions. Finally, in Appendix, we briefly review some properties of the Fox-Wright psi-function, providing some proofs of the facts used in the main text.

\section{Anisotropic flat Laplacian. UV case.}\label{1}

The heat kernel technique is quite powerful and has been applied with great success in many different areas of theoretical and mathematical physics as well as in pure mathematics (see, e.g., \cite{Vassilevich:2003xt} for review). Our interest in heat kernel is due to its intimate relation to the spectral action \cite{Chamseddine:1996zu}. Roughly speaking, one can say that knowing the heat kernel of some relevant Dirac (or Laplace) operator is equivalent to knowing what is the geometrical part of the physical action. There are two important points in this statement:

1) What is this physically relevant operator? By now, it is well known that for the standard choice of the Dirac operator one recovers the Einstein-Hilbert-Yang-Mills action \cite{Chamseddine:1996zu}. But what should be our choice if we want to employ this technique for some generalizations, like HL gravity?

2) Even if the relevant operator has been chosen, it is still a non-trivial task to find a heat kernel for this operator. In fact, even in the case of the standard choice, this result can be found only in the form of some asymptotic series (which is enough to produce the spectral action!). The knowledge of the exact results for some specific cases is essential for this calculation \cite{Vassilevich:2003xt}.

Based on these two points, it should be clear that the most natural first step would be to start with some appropriate generalization of the \textit{flat} Laplacian. In \cite{Pinzul:2010ct}, it was argued that such operator for the case of the foliated geometry with anisotropic scaling between space and time should be of the following form\footnote{This will capture UV structure of the space-time, hence the title of this section. More general case, which is also sensitive to IR, will be considered in the next section. See \cite{Pinzul:2010ct} and the discussion below for more details.}
\begin{eqnarray}
\mathfrak{L}=\partial_t^2 + (-1)^{z+1}(\partial_i\partial_i)^z \ .\nonumber
\end{eqnarray}
The index $i$ takes values $1,2,3$ and $\delta_{ij}$ is used to contract the indices. $z$ is the anisotropic scaling exponent, i.e. under
\begin{eqnarray}
& &t \rightarrow a^z t  \nonumber\\
& &\vec{x} \rightarrow a\vec{x}  \nonumber
\end{eqnarray}
$\mathfrak{L}$ has a well-defined scaling: $\mathfrak{L} \rightarrow a^{-2z}\mathfrak{L}$. The $(-1)^{z+1}$ factor in front of $(\partial_i\partial_i)^z$ in (\ref{heatkernel}) is to insure the generalized ellipticity (see the discussion in \cite{Pinzul:2010ct}).

So, we would like to find a solution to the following problem:
\begin{eqnarray}\label{heatkernel}
& &\partial_\tau K(x-x';\tau) - \mathfrak{L}K(x-x';\tau)=0 \\
& &K(x-x';+0)=\delta^{(4)}(x-x') \ .\nonumber
\end{eqnarray}
The formal solution of (\ref{heatkernel}) is given by
\begin{eqnarray}
& &K(x-x';\tau)=\langle x|\mathrm{e}^{\tau\mathfrak{L}} | x'\rangle\ , \ \mathrm{ where} \nonumber\\
& &\langle x| x'\rangle = \delta^{(4)}(x-x')\ .\nonumber
\end{eqnarray}
Here $|x\rangle$ is the standard generalized eigen-vector of $x$, while $p_i := -\mathrm{i} \partial_i$.
Because in the momentum representation
\begin{eqnarray}
& &\langle p|\mathrm{e}^{\tau\mathfrak{L}} | x\rangle = \mathrm{e}^{-\tau(p_0^2+\vec{p}^{2z})}\langle p| x\rangle \ ,\ \ \langle p | x\rangle =\frac{1}{(2\pi)^2}\mathrm{e}^{-\mathrm{i}px}\ , \nonumber
\end{eqnarray}
we have (as usual, $px:=p_0 t+\vec{p}\vec{x}$)
\begin{eqnarray}
& &K(x-x';\tau)= \frac{1}{(2\pi)^4}\int d^4 p\, \mathrm{e}^{\mathrm{i}p(x-x')-\tau (p_0^2 + \vec{p}^{2z})}\ . \nonumber
\end{eqnarray}
Note that the generalized ellipticity condition is crucial for the convergence of this integral. The integral over $p_0$ is trivially taken producing the standard result for the 1-d kernel:
\begin{eqnarray}\label{K0}
& &\frac{1}{2\pi}\int d p_0\, \mathrm{e}^{\mathrm{i}p_0(t-t')-\tau p_0^2}= \frac{1}{\sqrt{4\pi\tau}}\exp\left(-\frac{(t-t')^2}{4\tau}\right)=: K_0(t-t';\tau)\ .
\end{eqnarray}
Using the $SO(3)$ symmetry, the expression for $K(x-x';\tau)$ can be written in the following form
\begin{eqnarray}\label{K1}
& &K(x-x';\tau)= \frac{1}{2\pi^2 |\vec{x}-\vec{x}'|}K_0(t-t';\tau) \int\limits^\infty_0 d p\, p\, \sin(p|\vec{x}-\vec{x}'|)\, \mathrm{e}^{-\tau p^{2z}}\ .
\end{eqnarray}
We need to calculate the following integral
\begin{eqnarray}\label{I}
& &I = \int\limits^\infty_0 d p\, p\, \sin(\alpha p)\, \mathrm{e}^{-\beta p^{2z}}\ .
\end{eqnarray}
This is done by expanding $\sin(\alpha p)$
\begin{eqnarray}\label{expansion}
I & = & \sum_{k=0}^\infty\frac{(-1)^k}{(2k+1)!}\alpha^{2k+1}\int\limits^\infty_0 d p\, p^{2k+2}\, \mathrm{e}^{-\beta p^{2z}}= \nonumber\\
& = & \frac{1}{2z\alpha^2}\sum_{k=0}^\infty\frac{(-1)^k}{\Gamma(2k+2)}\left(\frac{\alpha}{\beta^{1/2z}}\right)^{2k+1} \int\limits^\infty_0 d p\, p^{\frac{2k+3}{2z}-1}\, \mathrm{e}^{-p} = \nonumber\\
& = & \frac{1}{2z\alpha^2}\left(\frac{\alpha}{\beta^{1/2z}}\right)^3 \sum_{k=0}^\infty \left(-\frac{\alpha^2}{\beta^{1/z}}\right)^{k}\frac{\Gamma(\frac{k+3/2}{z})}{\Gamma(2k+2)} \ .
\end{eqnarray}
Using the Gauss' multiplication formula for gamma-function, we can write for $\Gamma(2k+2)$
\begin{eqnarray}
\Gamma(2k+2)=\frac{2^{2k+3/2}}{\sqrt{2\pi}}\Gamma(k+1)\Gamma(k+3/2) \ .\nonumber
\end{eqnarray}
This allows us to re-write (\ref{expansion}) in the following form
\begin{eqnarray}\label{final1}
& &I = \frac{1}{2z\alpha^2}\left(\frac{\alpha}{\beta^{1/2z}}\right)^3 \frac{\sqrt{\pi}}{2}\sum_{k=0}^\infty \left(-\frac{\alpha^2}{4\beta^{1/z}}\right)^{k}\frac{1}{k!}\frac{\Gamma(\frac{k+3/2}{z})}{\Gamma(k+3/2)}\ .
\end{eqnarray}
The sum in (\ref{final1}) is immediately recognized as (1,1) Fox-Wright psi-function, ${}_1\!\Psi_1$ \cite{Wright:paper,Srivastava:book}, so the final answer for (\ref{I}) takes the form\footnote{It is interesting that the simple looking integral (\ref{I}) leads to this non-trivial result in terms of a relatively rare special function. As a curious observation, we note that this seems to be absent even in \cite{Gradshteyn}.}
\begin{eqnarray}\label{final2}
& &I = \frac{1}{2z\alpha^2}\left(\frac{\alpha}{\beta^{1/2z}}\right)^3 \frac{\sqrt{\pi}}{2} {}_1\!\Psi_1\left[(3/2z,1/z);(3/2,1);-\frac{\alpha^2}{4\beta^{1/z}}\right]\ .
\end{eqnarray}
In appendix, we give the definition of Fox-Wright psi-function and review some of its properties relevant for us.

Now, we use the result (\ref{final2}) to obtain a closed expression for our heat kernel. Recall that $\alpha = |\vec{x}-\vec{x}'|$ and $\beta = \tau$. Then using (\ref{final2}) and (\ref{K0}) in (\ref{K1}) we get
\begin{eqnarray}\label{final}
& &K(x-x';\tau) = \frac{1}{z (4\pi)^2\tau^{\frac{1}{2}(1+3/z)}}\, \mathrm{e}^{-\frac{(t-t')^2}{4\tau}} {}_1\!\Psi_1\left[(3/2z,1/z);(3/2,1);-\frac{|\vec{x}-\vec{x}'|^2}{4\tau^{1/z}}\right]\ .
\end{eqnarray}

Let us do some preliminary analysis of this result.

1) $\textbf{z=1}$. Using that ${}_1\!\Psi_1\left[(a,A);(a,A); x\right]= \exp(x)$, we immediately recover the classical result for the standard Laplacian:
\begin{eqnarray}
& &K(x-x';\tau) = \frac{1}{(4\pi\tau)^2}\, \mathrm{e}^{-\frac{(x-x')^2}{4\tau}} \ .\nonumber
\end{eqnarray}

2) \textbf{The limit $x \rightarrow x'$}. In this limit for the arbitrary $z$ the heat kernel (\ref{final}) is reduced to
\begin{eqnarray}
K(0;\tau) = \frac{\Gamma(3/2z)}{z\Gamma(3/2)}\,\frac{1}{(4\pi)^2\tau^{\frac{1}{2}(1+3/z)}}\ .\nonumber
\end{eqnarray}
It is well-known (see, e.g. \cite{Ambjorn:2005db}) that this is related to the return probability for a random walk on our space-time, which, in its turn, is related to the so-called spectral dimension $d_S$ by (also see the discussion in next section)
\begin{eqnarray}\label{spectral1}
K(0;\tau) \sim \frac{1}{\tau^{\frac{d_S}{2}}}\ .
\end{eqnarray}
So we see that we (trivially) reproduce the known result for the Horava-Lifshitz space-time \cite{Pinzul:2010ct,Horava:2009if}:
\begin{eqnarray}\label{HLd}
d_S = 1+\frac{3}{z}\ .
\end{eqnarray}

3) \textbf{Other representations of $K(x-x';\tau)$}. Here, for the convenience of the possible future applications, we give other forms of the heat kernel (\ref{final}) in terms of hypergeometric functions and Meijer's $G$-function. This is done with the help of the general formulas from the appendix, Eqs.(\ref{Hyper}),(\ref{Meijer}). Noting that in our case
\begin{eqnarray}
a=\frac{3}{2z}\, ,\ \alpha =1\, ,\ b=\frac{3}{2}\, ,\ \beta = z\ \mathrm{and}\ k=z  \nonumber
\end{eqnarray}
(all the notations as in the appendix), we have

i) Hypergeometric representation:
\begin{align}\label{KHyper}
&K(x-x';\tau) = \frac{1}{z (4\pi)^2\tau^{\frac{1}{2}(1+3/z)}}\, \mathrm{e}^{-\frac{(t-t')^2}{4\tau}} \left( \frac{\Gamma(\frac{3}{2z})}{\Gamma(\frac{3}{2})} + \sum_{s=1}^z \frac{\Gamma(\frac{3+2s}{2z})}{\Gamma(\frac{3+2s}{2})}{\left( -\frac{|\vec{x}-\vec{x}'|^2}{4\tau^{1/z}} \right)}^s\frac{1}{s!}\,\times \right.\nonumber\\
& \left. \times{}_1\! F_{2z-1}\left[ 1; \frac{2s+5}{2z},..,\frac{2s+3+2(z-1)}{2z},\frac{s+1}{z},...,\frac{s+z}{z} ; (-1)^z\frac{|\vec{x}-\vec{x}'|^{2z}}{(2z)^{2z}\tau} \right]\right) \ .
\end{align}
To arrive at this result, we have used the trivial parameter cancellation property for the hypergeometric functions:
$$
{}_n F_m\left[ ...,a,...; ..,a,...; x \right] = {}_{n-1} F_{m-1}\left[ ...; ...; x \right] \ .
$$

ii) Meijer's $G$-function representation:
\begin{align}\label{KMeijer}
&K(x-x';\tau) = \frac{2}{(4\pi)^2\tau^{\frac{1}{2}(1+3/z)}}\left(\frac{\pi}{z}\right)^z\, \mathrm{e}^{-\frac{(t-t')^2}{4\tau}}\times\nonumber\\
&\times G^{z\,1}_{1\,2z}\left( \frac{|\vec{x}-\vec{x}'|^{2z}}{(2z)^{2z}\tau}  \left|
         \begin{array}{c}
           1-\frac{3}{2z} \\
           0,\frac{1}{z},...,\frac{z-1}{z},\frac{2z-3}{2z},...,\frac{2z-(3+2(z-1))}{2z} \\
         \end{array}
 \right.\right)\ .
\end{align}

The analytic result for the heat kernel in either of the three forms, (\ref{final}), (\ref{KHyper}) or (\ref{KMeijer}), does not look simple. This shows that one should expect that the analysis of the general curved case would be considerably more involved than for the standard Laplacian.

\section{Anisotropic flat Laplacian. UV/IR interpolating case.}\label{2}
The results of the previous section can be readily extended to operators of the form  $\mathfrak{L}=\partial_t^2 + \sum_{k=0}^z (-1)^{k+1} \gamma_k(\partial_i\partial_i)^k$, $k \le z$, where $\gamma_k$ are some constants.\label{footnote}\footnote{Actually, $\gamma$'s are related to some characteristic scales $M_k$, $\gamma_k \sim M_k^{2(z-k)}$ \cite{Pinzul:2010ct}.} One immediately obtains the generalization of (\ref{K1}):
\begin{eqnarray}
& &K(x-x';\tau)= \frac{1}{2\pi^2 |\vec{x}-\vec{x}'|}K_0(t-t';\tau) \int\limits^\infty_0 d p\, p\, \sin(p|\vec{x}-\vec{x}'|)\, \mathrm{e}^{-\tau \sum\limits_{k=0}^z  \gamma_k p^{2^k}}\ .\nonumber
\end{eqnarray}
So, the analog of (\ref{I}) that needs to be calculated is
\begin{eqnarray}\label{Kgeneral}
& & I = \int\limits^\infty_0 d p\, p\, \sin(\alpha p)\, \mathrm{e}^{-\sum\limits_{k=0}^z  \beta_k p^{2k}}\ . \nonumber
\end{eqnarray}
The strategy is the same as was used to evaluate (\ref{I}):
\begin{eqnarray}\label{K3}
& & I =  \int\limits^\infty_0 d p\, p\, \mathrm{e}^{-\beta_z p^{2z}}\left[  \prod_{k=0}^{z-1}\sum_{j_k=0}^\infty \frac{1}{{j_k}!} (-\beta_k p^{2k})^{j_k} \right] \, \left[ \sum_{l=0}^\infty \frac{(-1)^l}{(2l+1)!} (\alpha p)^{2l+1} \right] = \\
& = &  \frac{1}{2\pi^2 \alpha} \sum_{\{j_k\} = 0}^{\infty} \sum_{l=0}^{\infty} \frac{(-1)^l}{(2l+1)!}\alpha^{2l+1} \left(\prod_{k=0}^{z-1}{\frac{(-\beta_k)^{j_k}}{j_k!}}\right)\int_0^\infty p^{(2+2l+\sum_k k j_k)/z} e^{-\beta_z p^{2z}} dp  \nonumber\\
& = & \frac{1}{4\pi^2 z} \sum_{\{j_k\} = 0}^{\infty} \sum_{l=0}^{\infty}\left( \prod_{k=0}^{z-1}{\frac{(-\beta_k)^{j_k}}{j_k!}}\right) \frac{\beta_z^{-(3/2+\sum_k k j_k)/z}}{(2l+1)!}(-\alpha^2 \beta_z^{-1/z})^{l} \Gamma\left(\frac{3/2 + l + \sum_k k j_k}{z}\right) \nonumber\\
& = & \frac{\sqrt{\pi}}{8\pi^2 z} \sum_{\{j_k\} = 0}^{\infty}\left( \prod_{k=0}^{z-1}{\frac{(-\beta_k)^{j_k}}{j_k!}}\right)\beta_z^{-(3/2+\sum_k k j_k)/z}{}_1\!\Psi_1 \left[ \left(\frac{3/2 +\sum_k k j_k}{z}, 1/z\right);(3/2,1);-\frac{\alpha^2}{4\beta_z^{1/z}} \right] \ ,\nonumber
\end{eqnarray}
where in the last step we again used the Gauss' multiplication formula. This leads to the heat kernel in the form
\begin{eqnarray}\label{heatkernel2}
& K(x-x';\tau) = \frac{1}{z (4\pi)^2 \tau^{\frac{1}{2}{(1+3/z)}}} \mathrm{e}^{-\frac{(t-t')^2}{4\tau}} \sum\limits_{\{j_k\} = 0}^{\infty}\left( \prod\limits_{k=0}^{z-1}{\frac{(-\tau \gamma_k)^{j_k}}{j_k!}}\right)(\tau \gamma_z)^{-\sum_k k j_k /z} \, \times\nonumber\\
&\times  {}_1\!\Psi_1 \left[ ((3/2 +\sum_k k j_k)/z, 1/z);(3/2,1);-\frac{|\vec{x}-\vec{x}'|^2}{4(\tau \gamma_z)^{1/z}} \right]\ .
\end{eqnarray}
As the first (trivial) check of this result, we should reproduce our heat kernel from the previous section, Eq.(\ref{final}). This is easily achieved by setting $\gamma_k = 0$ for all $ k \neq z$ and $\gamma_z = 1$.

Our further analysis we will perform for the slightly less general case when only one additional $\gamma_k$ is not zero, i.e. for the generalized Laplacian of the form
\begin{eqnarray}\label{simplifiedLap}
\mathfrak{L}=\partial_t^2 + (-1)^{z+1}\gamma_z(\partial_i\partial_i)^z +(-1)^{k+1} \gamma(\partial_i\partial_i)^k \ .
\end{eqnarray}
This will allow us to avoid some technical complications and yet preserve all the interesting physical features as "dimensional transmutation" and so on. For this specific case, the general formula (\ref{heatkernel2}) will slightly simplify
\begin{eqnarray}\label{heatkernelsimple}
& K(x-x';\tau) = \frac{1}{z (4\pi)^2 \tau^{\frac{1}{2}{(1+3/z)}}} \mathrm{e}^{-\frac{(t-t')^2}{4\tau}}   \sum\limits_{j=0}^\infty \frac{1}{j !} {\left[ -\gamma {\tau}^{1-k/z} \right]}^{j} \, \times\nonumber\\
&\times  {}_1\!\Psi_1\left[((3/2+k j)/z,1/z);(3/2,1);-\frac{|\vec{x}-\vec{x}'|^2}{4{\tau}^{1/z}} \right] \ .
\end{eqnarray}
We would like to address two questions:

1) What is the spectral dimension of the model based on (\ref{simplifiedLap})?

2) Can we show analytically that in IR (the precise meaning of this will be given later) the system is indistinguishable from the one with the generalized Laplacian $\mathfrak{L}=\partial_t^2 + (-1)^{k+1} \gamma(\partial_i\partial_i)^k$?

To answer both of these questions, another representation of our result (\ref{heatkernel2}) will prove useful. We will work it out for our specific case. For this case the penultimate step in (\ref{K3}) will look like
\begin{eqnarray}
I = \frac{1}{4\pi^2 z} \sum_{j = 0}^{\infty} \sum_{l=0}^{\infty}\frac{(-\beta)^{j}}{j!} \frac{\beta_z^{-(3/2+k j)/z}}{(2l+1)!}(-\alpha^2 \beta_z^{-1/z})^{l} \Gamma\left(\frac{3/2 + l + k j}{z}\right) \ .\nonumber
\end{eqnarray}
Now we have two options: either take the sum over $l$ or over $j$. In the first case we will arrive at the result (\ref{heatkernel2}) in terms of ${}_1\!\Psi_1$. The second one will lead to a new representation in terms of ${}_1\!\Psi_0$:\footnote{I.e. relation between (\ref{K3}) and (\ref{I3}) below has the form of a re-summation formula.}
\begin{eqnarray}\label{I3}
I = \frac{1}{4\pi^2 z} \sum_{l = 0}^{\infty} \frac{\beta_z^{-3/2z}}{(2l+1)!}(-\alpha^2 \beta_z^{-1/z})^l {}_1\!\Psi_0\left[\left( \frac{3+2l}{2z},\frac{k}{z}\right);-\beta\beta_z^{-k/z}\right] \ .
\end{eqnarray}
Using (\ref{I3}), we have an equivalent representation for the heat kernel:
\begin{eqnarray}\label{heatkernel3}
& K(x-x';\tau)= \frac{2}{z \gamma_z^\frac{3}{2z} \sqrt{\pi}(4\pi)^2 \tau^{\frac{1}{2}{(1+\frac{3}{z})}}} \mathrm{e}^{-\frac{(t-t')^2}{4\tau}} \sum\limits_{l = 0}^{\infty} \frac{(-|\vec{x}-\vec{x}'|^2(\tau \gamma_z)^{-\frac{1}{z}})^l}{(2l+1)!} \times \nonumber \\
&\times {}_1\!\Psi_0\left[\left( \frac{3+2l}{2z},\frac{k}{z}\right);-\gamma\gamma_z^{-\frac{k}{z}}\tau^{1-\frac{k}{z}}\right] \ .
\end{eqnarray}

\subsection{Spectral dimension}

Now we would like to analyze the spectral dimension, which follows from (\ref{heatkernel3}). We recall that the spectral dimension can be read from the return probability $K(0;\tau)$, see the discussion around Eq.(\ref{spectral1}). Using (\ref{spectral1}) we can \textit{define} the spectral dimension of generalized geometries as \cite{Ambjorn:2005db,Horava:2009if}
\begin{eqnarray}
d_S = -2\frac{d\ln K(0;\tau)}{d\ln\tau}\ .\nonumber
\end{eqnarray}
Applying this to our case with $K(0;\tau)$ given by (only $l=0$ term will contribute in (\ref{heatkernel3}) when $|x-x'|=0$)
\begin{eqnarray}\label{heatkernel0}
& K(0;\tau)= \frac{2}{z \gamma_z^\frac{3}{2z} \sqrt{\pi}(4\pi)^2 \tau^{\frac{1}{2}{(1+\frac{3}{z})}}} {}_1\!\Psi_0\left[\left( \frac{3}{2z},\frac{k}{z}\right);-\gamma\gamma_z^{-\frac{k}{z}}\tau^{1-\frac{k}{z}}\right] \ ,\nonumber
\end{eqnarray}
we get the analytic answer for the \textit{running} spectral dimension
\begin{eqnarray}\label{spectralfinal}
d_S = 1+\frac{3}{z} + 2\gamma\gamma_z^{-\frac{k}{z}}\tau^{1-\frac{k}{z}}\left(1-\frac{k}{z}\right) \frac{{}_1\!\Psi_0\left[\left( \frac{3+2k}{2z}, \frac{k}{z}\right);-\gamma\gamma_z^{-\frac{k}{z}}\tau^{1-\frac{k}{z}}\right]}{{}_1\!\Psi_0\left[\left( \frac{3}{2z},\frac{k}{z}\right);-\gamma\gamma_z^{-\frac{k}{z}}\tau^{1-\frac{k}{z}}\right]} \ .
\end{eqnarray}
In derivation of this result we used the Proposition \ref{derivative} from the appendix on derivative of psi-function.

To analyze (\ref{spectralfinal}), we need to clarify what is understood under UV or IR regime. What we expect physically is that in both limits the spectral dimension will be given by the formula (\ref{HLd}), with the only difference that in IR instead of $z$ the dimension should be controlled by $k$. We will show below that this can be demonstrated analytically. So, what plays the role of the parameter controlling the UV/IR transition? Taking into account the footnote on the page \pageref{footnote}, one can easily see that such dimensionless parameter is $\rho:=\gamma\gamma_z^{-\frac{k}{z}}\tau^{1-\frac{k}{z}}$. When $\rho\ll 1$ we are in deep UV and $\rho\gg 1$ corresponds to IR. This could be understood recalling the interpretation of $\tau$ as the return time (not the time of the space-time!). So, $\tau$, small compared to the scale determined by $\gamma$'s, corresponds to probing space-time distances shorter then this scale and vice versa. Now we can analyze (\ref{spectralfinal}) in both regimes.

a) \textbf{UV.} Because ${}_1\!\Psi_0 [(a,A),z]\xrightarrow[|z|\rightarrow 0]{}\Gamma (a)$, it is clear that the second term in (\ref{spectralfinal}) goes to zero in this limit, which gives us the expected result:
\begin{eqnarray}
d^{UV}_S = 1+\frac{3}{z} \ . \nonumber
\end{eqnarray}

b) \textbf{IR.} To analyze this regime, we will use the result of the proposition \ref{asymptotics} of the appendix:
\begin{eqnarray}
{}_1\!\Psi_0 [(a,A),-z]\xrightarrow[|z|\rightarrow\infty]{}\frac{\Gamma (a/A)}{A}z^{-a/A} \ . \nonumber
\end{eqnarray}
Applying this to (\ref{spectralfinal}), one immediately arrives at the IR dimension
\begin{eqnarray}
d^{IR}_S = 1+\frac{3}{k} \ .\nonumber
\end{eqnarray}

So, we can conclude that (\ref{spectralfinal}) provides the analytic answer for the (classical) flow of the spectral dimension, see Fig.(\ref{fig:scaling}) for the case $k=1$. Comparing this to the results of \cite{Pinzul:2010ct} (where the analytic answer for the intermediate regime was not known), one sees perfect agreement.
\begin{figure}[htb]
\begin{center}
\leavevmode
\includegraphics[scale=0.7]{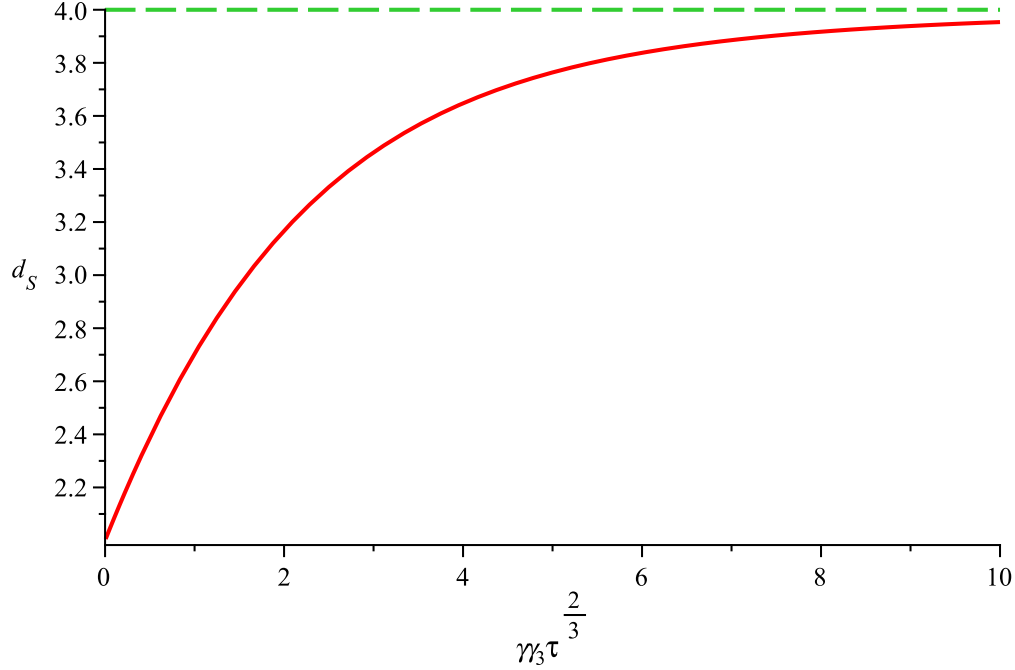}
\end{center}
\caption{The example of a smooth transition from the UV to IR regime for $z=3$ and $k=1$.}
\label{fig:scaling}
\end{figure}

\subsection{UV/IR behaviour}

In this part, we would like to give an analytic proof that the whole dynamics (not just spectral dimension as above) of the system defined by the generalized Laplacian (\ref{simplifiedLap}) in UV is controlled by $\mathfrak{L}^{UV}=\partial_t^2 + (-1)^{z+1}\gamma_z(\partial_i\partial_i)^z$ while in IR by $\mathfrak{L}^{IR}=\partial_t^2 + (-1)^{k+1} \gamma(\partial_i\partial_i)^k$. For this we will show that in both regimes the exact heat kernel (\ref{heatkernelsimple}) reduces to the form (\ref{final}) with the obvious change $z\rightarrow k$ in IR. To do so, we will start with the heat kernel in the form (\ref{heatkernel3}).

a) \textbf{UV.} As we discussed, this corresponds to $\gamma\gamma_z^{-\frac{k}{z}}\tau^{1-\frac{k}{z}} \ll 1$, so we can use
$$
{}_1\!\Psi_0 \left[\left(\frac{3+2l}{2z},\frac{k}{z}\right), -\gamma\gamma_z^{-\frac{k}{z}}\tau^{1-\frac{k}{z}}\right]\rightarrow\Gamma \left(\frac{3+2l}{2z}\right) \ .
$$
Using this in (\ref{heatkernel3}), we have
\begin{eqnarray}\label{hkuv}
& &K^{UV}(x-x';\tau)= \frac{2}{z \gamma_z^\frac{3}{2z} \sqrt{\pi}(4\pi)^2 \tau^{\frac{1}{2}{(1+\frac{3}{z})}}} \mathrm{e}^{-\frac{(t-t')^2}{4\tau}} \sum\limits_{l = 0}^{\infty} \frac{(-|\vec{x}-\vec{x}'|^2(\tau \gamma_z)^{-\frac{1}{z}})^l}{(2l+1)!}\Gamma \left(\frac{3+2l}{2z}\right) = \nonumber\\
& &= \frac{1}{z \gamma_z^\frac{3}{2z}(4\pi)^2\tau^{\frac{1}{2}(1+3/z)}}\, \mathrm{e}^{-\frac{(t-t')^2}{4\tau}} {}_1\!\Psi_1\left[(3/2z,1/z);(3/2,1);-\frac{|\vec{x}-\vec{x}'|^2}{4(\gamma_z \tau)^{1/z}}\right]\ ,
\end{eqnarray}
where we used the definition of ${}_1\!\Psi_1$ from the appendix. It is clear that (\ref{hkuv}) is exactly the heat kernel (\ref{final}).

b) \textbf{IR.} As in the case of the spectral dimension, now we have $\gamma\gamma_z^{-\frac{k}{z}}\tau^{1-\frac{k}{z}} \gg 1$, so we can use asymptotics
$$
{}_1\!\Psi_0 \left[\left(\frac{3+2l}{2z},\frac{k}{z}\right), -\gamma\gamma_z^{-\frac{k}{z}}\tau^{1-\frac{k}{z}}\right]\rightarrow\frac{z}{k}\Gamma \left(\frac{3+2l}{2k}\right) \left(\gamma\gamma_z^{-\frac{k}{z}}\tau^{1-\frac{k}{z}}\right)^{-\frac{3+2l}{2k}}\ .
$$
Substituting this in (\ref{heatkernel3}) and again using the definition of ${}_1\!\Psi_1$, one immediately arrives at the correct IR answer:
\begin{eqnarray}\label{hkir}
& &K^{IR}(x-x';\tau)= \frac{1}{k \gamma^\frac{3}{2k}(4\pi)^2\tau^{\frac{1}{2}(1+3/k)}}\, \mathrm{e}^{-\frac{(t-t')^2}{4\tau}} {}_1\!\Psi_1\left[(3/2k,1/k);(3/2,1);-\frac{|\vec{x}-\vec{x}'|^2}{4(\gamma \tau)^{1/k}}\right]\ . \nonumber
\end{eqnarray}

\section{Conclusions}

In this work we have studied heat kernels for a class of the generalized flat Laplacians. This problem is of interest both mathematically and physically.

Mathematically, in addition to that it is always interesting to find some new solutions of the heat kernel equation, our results of a great interest due to the their connection to generalized geometries. In our particular case, the objects we have studied are essential for the spectral geometrical study of foliated manifolds with anisotropic scaling. Using our analytic results, we were able to explicitly study some interesting geometrical properties of this geometry, such as spectral dimension.

Physically, the heat kernel is a crucial element of the spectral action approach. Namely, it is responsible for the geometrical part of the action, which is given by the trace of the relevant Laplace operator. Of course, in the flat case this part will be trivially zero even for the generalized Laplacian, but as we explained in Introduction, the study of the flat heat kernel is a very important step towards the general case. Once again, having analytic solution allows the detailed analysis. In particular, we have been able to explicitly study the UV/IR transition in heat kernel.

As a bonus outcome of our study, we should mention the active use of Fox-Wright psi-functions. Along with obtaining (or, probably re-deriving?) some properties of these functions, such as mentioned in Section \ref{2} re-summation formula or expression for the integral (\ref{I}) in terms of a psi-function, the main point, in our opinion, is that we bring to the attention of broader scientific community such a powerful yet less known object as the Fox-Wright psi-function. Being more general then hypergeometric functions, it has potential to find its way into many physical problems. This, taking into account the well-studied properties of Fox-Wright psi-functions, might help finding analytic solutions that have been missed before.

The future directions quite naturally follow from the main goal of this line of the research: application of the spectral action principle to HL gravity. The next step should be the study of the heat kernel equation for some less trivial (non-flat) but still not general geometry. The complexity of our flat solutions shows that already this step will be technically a serious challenge and our preliminary study absolutely proves this. Having the solution to this problem, one should try to reproduce HL action as a spectral action for this specific geometry. Ultimately, this should lead to the complete spectral action. This, as we argued, might help to tame the huge ambiguity in the present formulation of HL gravity as well as provide deeper understanding of the model.

\section*{Acknowledgements}

A.P. acknowledges partial support of CNPq under grant no.306068/2012-5. A.M. is supported by CNPq master's scholarship.

\section*{Appendix}

Here we summarize some relevant properties of the Fox-Wright psi-function. Due to the importance for our results, we give proofs to some statements below. Also, we find it useful to do so because the psi-function is somewhat unknown to a non-specialist in the area,\footnote{E.g., we were unable to find this function neither in Maple nor in Mathematica.} so providing proofs would help to get acquainted with this function to some extent. For more details, see \cite{Srivastava:book,Miller:paper,Kilbas:book}.

The Fox-Wright psi-function is defined as a natural generalization of the generalized hypergeometric function:
\begin{definition}
\begin{eqnarray}\label{psifunction}
{}_p\Psi_q\left[(a_1,A_1),...,(a_p,A_p);(b_1,B_1),...,(b_q,B_q); z\right] = \sum_{k=0}^\infty\frac{\prod_{n=1}^p \Gamma(a_n+A_n k)}{\prod_{n=1}^q \Gamma(b_n+B_n k)}\frac{z^k}{k!}\ .
\end{eqnarray}
\end{definition}
It is immediately clear that
\begin{eqnarray}\label{PsiF}
{}_p\Psi_q\left[(a_1,1),...,(a_p,1);(b_1,1),...,(b_q,1); z\right] = \frac{\prod_{n=1}^p \Gamma(a_n)}{\prod_{n=1}^q \Gamma(b_n)}\,{}_p F_q\left[a_1,...,a_p;b_1,...,b_q; z\right]\ . \nonumber
\end{eqnarray}

To study the properties of the Fox-Wright psi-function (such as its asymptotic behaviour) and its relation to some other special functions, it is useful to represent psi-function as a special case of the Fox \textit{H}-function, which is defined as \cite{Srivastava:book}
\begin{definition}
\begin{eqnarray}\label{Hfunction}
 H^{m, \,n}_{p,\,q}\left[ z \left|
         \begin{array}{c}
           (a_1,A_1),...,(a_p,A_p) \\
           (b_1,B_1),...,(b_p,B_p)\\
         \end{array}
 \right.\right] = \frac{1}{2 \pi i } \int_{C} \frac{\prod^m_{j=1}\Gamma(b_j-B_j s)\prod^n_{i=1}\Gamma(1-a_i+A_i s)}{\prod^q_{j=m+1}\Gamma(1-b_j+B_j s)\prod^p_{i=n+1}\Gamma(a_i-A_i s)} z^{s} ds \ ,
\end{eqnarray}
where $C$ is taken to be a contour that separates the poles of $\Gamma(b_j-B_j s)$  and of $\Gamma(1-a_i+A_i s)$, and $a_i,\ b_i$ are such that there are no coinciding poles while $A_i,\ B_i \in \mathds{R}_+$.
\end{definition}
So, the Fox \textit{H}-function itself is a generalization of the Meijer's \textit{G}-function, hence the relation of the Fox-Wright psi-function with \textit{G}-function (see below, Eq.(\ref{Meijer})).
\begin{proposition}\label{psiHrelation} Fox-Wright psi-function is given in terms of the Fox \textit{H}-function as\\
${}_p\Psi_q\left[(a_1,A_1),...,(a_p,A_p);(b_1,B_1),...,(b_q,B_q); z\right]=H^{1, \,p}_{p,\,q+1}\left[-z \left|
         \begin{array}{c}
           (1-a_1,A_1),...,(1-a_p,A_p) \\
           (0,1),(1-b_1,B_1),...,(b_p,B_p)\\
         \end{array}\right.\right] $
\end{proposition}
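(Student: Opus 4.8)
The plan is to evaluate the Mellin--Barnes integral defining the right-hand side by the residue theorem and to recognize the resulting series as the defining series (\ref{psifunction}) of the psi-function. The whole argument is a matter of matching the shifted parameters in the $H$-function to the arguments of the Gamma functions in (\ref{psifunction}), so the first step is purely bookkeeping.

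First I would substitute the prescribed parameters into the definition (\ref{Hfunction}), reading off $m=1$, $n=p$, with $p$ upper parameters $(1-a_i,A_i)$ and $q+1$ lower parameters $(0,1),(1-b_1,B_1),\dots,(1-b_q,B_q)$. The single factor selected by $m=1$ is $\Gamma(b_1-B_1 s)=\Gamma(-s)$. The $n=p$ upper parameters contribute $\prod_{i=1}^p\Gamma\!\left(1-(1-a_i)+A_i s\right)=\prod_{i=1}^p\Gamma(a_i+A_i s)$, the remaining lower parameters contribute to the denominator $\prod_{j=1}^q\Gamma\!\left(1-(1-b_j)+B_j s\right)=\prod_{j=1}^q\Gamma(b_j+B_j s)$, and the product $\prod_{i=n+1}^p$ is empty because $n=p$. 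Hence the integrand collapses to
\[
\frac{\Gamma(-s)\,\prod_{i=1}^p\Gamma(a_i+A_i s)}{\prod_{j=1}^q\Gamma(b_j+B_j s)}\,(-z)^s ,
\]
integrated against $\frac{1}{2\pi i}\,ds$ along $C$, where by construction $C$ separates the poles of $\Gamma(-s)$ from those of the $\Gamma(a_i+A_i s)$.

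Next I would close the contour to the right, enclosing the simple poles of $\Gamma(-s)$ at the non-negative integers $s=k$. Using $\mathrm{Res}_{s=k}\,\Gamma(-s)=(-1)^{k+1}/k!$ and the clockwise orientation of a right-hand closure (which supplies an overall minus sign), the integral equals
\[
-\sum_{k=0}^\infty\frac{(-1)^{k+1}}{k!}\,
\frac{\prod_{i=1}^p\Gamma(a_i+A_i k)}{\prod_{j=1}^q\Gamma(b_j+B_j k)}\,(-z)^k
=\sum_{k=0}^\infty\frac{1}{k!}\,
\frac{\prod_{i=1}^p\Gamma(a_i+A_i k)}{\prod_{j=1}^q\Gamma(b_j+B_j k)}\,z^k ,
\]
where the residue sign $(-1)^{k+1}$, the orientation minus, and the $(-1)^k$ from $(-z)^k$ combine to leave $+z^k$. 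The right-hand side is precisely the definition (\ref{psifunction}) of ${}_p\Psi_q$, which completes the identification.

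The main obstacle will be the analytic justification of the contour closure rather than the algebra above: one must verify that the arc in the right half-plane contributes nothing in the limit and that the residue series converges. Both points hinge on the large-$|s|$ asymptotics of the Gamma-function ratio together with the factor $(-z)^s$, which via Stirling's formula yield the standard admissibility conditions on $C$, on the parameters $A_i,B_j\in\mathds{R}_+$, and on $z$ (a growth condition comparing $\sum_i A_i$ with $\sum_j B_j$, together with a restriction on $|z|$). Under the hypotheses already built into Definition (\ref{Hfunction}) these ensure exponential decay of the integrand along the arc, so that the termwise residue evaluation is legitimate; everything else is routine tracking of the parameter shifts $a_i\mapsto 1-a_i$ and $b_j\mapsto 1-b_j$ and of the extra $(0,1)$ entry that generates the $\Gamma(-s)$ whose poles drive the series.
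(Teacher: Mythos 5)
Your proposal is correct and follows essentially the same route as the paper's own proof: substituting the parameters into the Mellin--Barnes definition of the $H$-function, closing the contour to the right around the poles of $\Gamma(-s)$ at the non-negative integers, and summing the residues to recover the series (\ref{psifunction}). Your sign bookkeeping (residue $(-1)^{k+1}/k!$ combined with the clockwise orientation and $(-z)^k$) and your remarks on justifying the contour closure are consistent with, and slightly more detailed than, the argument given in the paper.
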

\begin{proof}
The proof is really straightforward. From the definition (\ref{Hfunction})
$$
H^{1, \,p}_{p,\,q+1}\left[-z \left|
         \begin{array}{c}
           (1-a_1,A_1),...,(1-a_p,A_p) \\
           (0,1),(1-b_1,B_1),...,(b_p,B_p)\\
         \end{array}\right.\right] = \frac{1}{2 \pi i } \int_{C} \frac{\Gamma(-s)\prod^p_{i=1}\Gamma(a_i+A_i s)}{\prod^q_{j=1}\Gamma(b_j+B_j s)} (-z)^{s} ds \ .
$$
$\Gamma(-s)$ has all the poles at $s=n=0,1,2,...$, while $\Gamma(a_i+A_i s)$ has poles at $s=-\frac{n+a_i}{A_i}$. For $a_i, A_i > 0$ these poles are separated by the Mellin-type contour $C=\{\mathrm{Im}\ s = -\epsilon\}$, for some small enough $\epsilon>0$. Closing this contour at infinity in the right half-plane, only poles of $\Gamma(-s)$ will contribute producing
$$
H^{1, \,p}_{p,\,q+1}\left[-z \left|
         \begin{array}{c}
           (1-a_1,A_1),...,(1-a_p,A_p) \\
           (0,1),(1-b_1,B_1),...,(b_p,B_p)\\
         \end{array}\right.\right] = \sum_{k=0}^{\infty}\frac{\prod_{n=1}^p \Gamma(a_n+A_n k)}{\prod_{n=1}^q \Gamma(b_n+B_n k)}\frac{z^k}{k!}\ ,
$$
which is exactly the definition (\ref{psifunction}).
\end{proof}

Essentially using this result, in \cite{Miller:paper} several nice properties of ${}_1\!\Psi_1\left[(a,A);(b,B);z\right]$ were obtained for the case when $A,\ B$ are either integer or rational numbers. The case of the rational parameters is the one relevant for us (see Eq.(\ref{final2})).
One has the following relations between ${}_1\!\Psi_1\left[(a,\frac{\alpha}{k});(b,\frac{\beta}{k}); z\right]$ $\alpha,\beta,k\in \mathds{R}$ and some other special functions (see, e.g. \cite{Gradshteyn} for the definitions and properties of the relevant functions):

i) \textit{Hypergeometric}.
\begin{eqnarray}\label{Hyper}
& &{}_1\!\Psi_1\left[\left(a,\frac{\alpha}{k}\right);\left(b,\frac{\beta}{k}\right); z\right] = \frac{\Gamma(a)}{\Gamma(b)}+ \sum_{s=1}^k \frac{\Gamma(a+\frac{\alpha}{k}s)}{\Gamma(b+\frac{\beta}{k}s)} \frac{z^s}{s!}\times \\
& &\times {}_{\alpha + 1} F_{\beta + k}\left[ 1,\frac{s}{k}+\frac{a}{\alpha},..., \frac{s}{k}+\frac{a+\alpha - 1}{\alpha} ;  \frac{s}{k}+\frac{b}{\beta},...,\frac{s}{k}+\frac{b+\beta-1}{\beta}, \frac{s+1}{k},...,\frac{s+k}{k}; \frac{\alpha^\alpha}{\beta^\beta}\left(\frac{z}{k}\right)^k\right]\ .\nonumber
\end{eqnarray}

ii) \textit{Meijer's $G$-function}.
\begin{eqnarray}\label{Meijer}
& &{}_1\!\Psi_1\left[(a,\frac{\alpha}{k});(b,\frac{\beta}{k}); z\right] = 2\pi^{(1+\beta-\alpha+k)/2}\sqrt{\frac{k\beta}{\alpha}} \frac{\alpha^\alpha}{\beta^\beta}\times \\
& &\times G^{k\,\alpha}_{\alpha\,k+\beta}\left( \frac{\alpha^\alpha}{\beta^\beta}\left(-\frac{z}{k}\right)^k  \left|
         \begin{array}{c}
           1-\frac{a}{\alpha},...,1-\frac{a+\alpha-1}{\alpha} \\
           0,\frac{1}{k},...,\frac{k-1}{k},1-\frac{b}{\beta},...,1-\frac{b+\beta-1}{\beta} \\
         \end{array}
 \right.\right)\ .\nonumber
\end{eqnarray}

Also, using Proposition \ref{psiHrelation}, one can easily study the asymptotic expansion for the large argument of the Fox-Wright psi-function. We will do this for the case relevant for us, i.e. for ${}_1\!\Psi_0\left[(a,A);z\right]$.
\begin{proposition}\label{asymptotics}
The asymptotics of ${}_1\!\Psi_0\left[(a,A);z\right]$ is given by
$$
{}_1\!\Psi_0\left[(a,A);-z\right]\xrightarrow[|z|\rightarrow\infty]{}\frac{\Gamma\left(\frac{a}{A}\right)}{A} z^{-\frac{a}{A}} + \mathcal{O}\left(z^{-\frac{a+1}{A}}\right)
$$
\end{proposition}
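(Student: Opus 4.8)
The plan is to exploit Proposition \ref{psiHrelation}, which realizes the psi-function as a Mellin--Barnes integral, and then to read off the large-argument behaviour by shifting the contour and collecting residues. Specializing the relation to $p=1$, $q=0$ gives ${}_1\!\Psi_0[(a,A);\zeta]=H^{1,1}_{1,1}[-\zeta\,|\,\ldots]$, and the integral representation of Definition \ref{Hfunction} then has integrand $\Gamma(-s)\,\Gamma(a+As)\,(-\zeta)^s$ along a contour $C$ that separates the poles of $\Gamma(-s)$, at $s=0,1,2,\dots$ on the right, from those of $\Gamma(a+As)$, at $s_n=-(a+n)/A$ for $n\ge 0$ on the left (recall $a,A>0$). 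Putting $\zeta=-z$, so that the $H$-function argument becomes $z$, I would start from
\begin{eqnarray}
{}_1\!\Psi_0[(a,A);-z] = \frac{1}{2\pi i}\int_C \Gamma(-s)\,\Gamma(a+As)\,z^s\,ds \ . \nonumber
\end{eqnarray}

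Since $|z^s|=|z|^{\mathrm{Re}\,s}$ is small for large $|z|$ when $\mathrm{Re}\,s<0$, the asymptotics is governed by the left poles $s_n$, and the idea is to push $C$ leftward, picking up their residues. Using $\Gamma(a+As)\sim\frac{(-1)^n}{n!}\frac{1}{A(s-s_n)}$ near $s_n$ gives
\begin{eqnarray}
\mathrm{Res}_{s=s_n}\left[\Gamma(-s)\,\Gamma(a+As)\,z^s\right] = \frac{(-1)^n}{n!\,A}\,\Gamma\!\left(\frac{a+n}{A}\right)z^{-(a+n)/A}\ . \nonumber
\end{eqnarray}
The $n=0$ residue is precisely $\frac{\Gamma(a/A)}{A}z^{-a/A}$, the claimed leading term, while the $n=1$ residue is of order $z^{-(a+1)/A}$. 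Shifting $C$ to a vertical line $\mathrm{Re}\,s=c$ with $c$ just below $-(a+1)/A$ therefore peels off the leading residue (plus the manifestly $\mathcal{O}(z^{-(a+1)/A})$ second residue) and leaves a remainder integral of size $\mathcal{O}(|z|^{c})$.

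The technical heart is to justify the contour shift and to bound the remainder, both of which rest on Stirling's formula along vertical lines. Writing $s=c+it$, one has $|\Gamma(-s)|\sim\sqrt{2\pi}\,|t|^{-c-1/2}e^{-\pi|t|/2}$ and $|\Gamma(a+As)|\sim\sqrt{2\pi}\,|At|^{a+Ac-1/2}e^{-\pi A|t|/2}$ as $|t|\to\infty$, so the integrand decays like $|t|^{\,\mathrm{const}}\,e^{-\pi(1+A)|t|/2}$. As $A>0$, this exponential decay makes the integral absolutely convergent on every vertical line, forces the horizontal connecting segments to vanish as $\mathrm{Im}\,s\to\pm\infty$ (legitimizing the shift), and bounds the remainder by $\mathcal{O}(|z|^{c})=\mathcal{O}(z^{-(a+1)/A})$. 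I expect this uniform estimate for the product of two gamma factors with \emph{different} stretching coefficients, together with the verification that the decay rate $\pi(1+A)/2$ is strictly positive, to be the only real obstacle; once it is in place, assembling the leading residue and the bounded remainder yields the stated one-term asymptotics with error $\mathcal{O}(z^{-(a+1)/A})$.
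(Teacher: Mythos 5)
Your proposal is correct and follows the same basic route as the paper's own proof: both start from the Mellin--Barnes representation ${}_1\!\Psi_0\left[(a,A);-z\right]=\frac{1}{2\pi i}\int_C\Gamma(-s)\Gamma(a+As)\,z^s\,ds$ supplied by Proposition \ref{psiHrelation}, and both extract the large-$|z|$ behaviour from the residues at the left-hand poles $s_n=-(a+n)/A$ of $\Gamma(a+As)$; your residue $\frac{(-1)^n}{n!\,A}\Gamma\!\left(\frac{a+n}{A}\right)z^{-(a+n)/A}$ is exactly the term the paper obtains. The substantive difference is in how the contour manipulation is justified. The paper ``closes the contour at infinity in the left half-plane'' and presents the answer as the full series $\frac{z^{-a/A}}{A}\sum_{n\ge 0}\frac{(-1)^n}{n!}\Gamma\!\left(\frac{a+n}{A}\right)z^{-n/A}$; but for $A<1$ --- precisely the case used in the body of the paper, where $A=k/z$ with $k<z$ --- this series diverges for every $z$, the integrand grows as $\mathrm{Re}\,s\to-\infty$, and the closing-at-infinity step cannot be taken literally: the sum is only an asymptotic expansion. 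Your version, which shifts $C$ to a finite vertical line $\mathrm{Re}\,s=c$ just past the second pole and bounds the remainder integral by $\mathcal{O}(|z|^{c})$ using the $e^{-\pi(1+A)|t|/2}$ decay on vertical lines from Stirling, is the rigorous form of the same argument and is exactly what is needed to convert the formal residue sum into the stated one-term asymptotics with controlled error $\mathcal{O}\left(z^{-(a+1)/A}\right)$. In short, your proof is not only correct but, on this point, more careful than the one in the paper.
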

\begin{proof}
From Proposition \ref{psiHrelation}, we have
${}_1\!\Psi_0\left[(a,A);-z\right]=\frac{1}{2 \pi i } \int_{C} \Gamma(-s)\Gamma(a+A s) z^{s} ds$. Let us now close the contour at infinity in the left half-plane. Then only the poles of $\Gamma(a+A s)$ will contribute:
$$
\Gamma(a+A s)\xrightarrow[s\rightarrow -\frac{n+a}{A}]{} \frac{(-1)^n}{n! A}\frac{1}{s+\frac{s+n}{A}}\ .
$$
Then we immediately get
$$
{}_1\!\Psi_0\left[(a,A);-z\right] = \frac{z^{-\frac{a}{A}}}{A}\sum_{n=0}^{\infty}\frac{(-1)^n}{n!}z^{-\frac{n}{A}}\Gamma\left(\frac{a+n}{A}\right)\ .
$$
The leading contribution for $|z|\rightarrow\infty$ comes from $n=0$, which completes the proof.
\end{proof}

\noindent This result can be easily generalized to the case of arbitrary psi-functions, ${}_p\Psi_q$.

One last result we would like to obtain is about derivative of the Fox-Wright psi-function.
\begin{proposition}\label{derivative}
\begin{eqnarray}
& &\frac{d}{d z}{}_p\Psi_q\left[(a_1,A_1),...,(a_p,A_p);(b_1,B_1),...,(b_q,B_q); z\right] = \nonumber\\ & & ={}_p\Psi_q\left[(a_1+A_1,A_1),...,(a_p+A_p,A_p);(b_1+B_1,B_1),...,(b_q+B_q,B_q); z\right]\nonumber
\end{eqnarray}
\end{proposition}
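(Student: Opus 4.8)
The plan is to work directly from the series definition (\ref{psifunction}), since the result is fundamentally a statement about differentiating a power series in $z$ term by term and then reindexing. First I would write the left-hand side as
\begin{eqnarray}
\frac{d}{dz}\sum_{k=0}^\infty\frac{\prod_{n=1}^p \Gamma(a_n+A_n k)}{\prod_{n=1}^q \Gamma(b_n+B_n k)}\frac{z^k}{k!}\nonumber
\end{eqnarray}
and differentiate under the sum. The $k=0$ term is constant and drops out, while each remaining term picks up a factor of $k$ from $\frac{d}{dz}z^k = k z^{k-1}$, which cancels one factor against $k! = k\,(k-1)!$, leaving $\frac{z^{k-1}}{(k-1)!}$.

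Second, I would shift the summation index by setting $m=k-1$, so that the sum again runs from $m=0$ and the generic term becomes
\begin{eqnarray}
\frac{\prod_{n=1}^p \Gamma(a_n+A_n (m+1))}{\prod_{n=1}^q \Gamma(b_n+B_n (m+1))}\frac{z^{m}}{m!}\ .\nonumber
\end{eqnarray}
The key algebraic step is then to regroup the gamma-arguments as $a_n+A_n(m+1)=(a_n+A_n)+A_n m$ in the numerator and $b_n+B_n(m+1)=(b_n+B_n)+B_n m$ in the denominator. Matching this against the definition (\ref{psifunction}) with shifted first entries $a_n\mapsto a_n+A_n$ and $b_n\mapsto b_n+B_n$ (while leaving each $A_n,B_n$ untouched) reproduces exactly the claimed right-hand side.

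The only point requiring care is the justification of term-by-term differentiation. Since (\ref{psifunction}) is a genuine power series in $z$, it suffices to note that on any compact subset of its domain of convergence the convergence is uniform, so that differentiation commutes with the summation; here I would simply invoke the standard theorem on differentiation of power series. I do not expect any real obstacle: the statement is an identity between convergent power series, and the entire substance lies in the index shift together with the observation that incrementing $k\to k+1$ advances each $a_n$ by precisely $A_n$ and each $b_n$ by precisely $B_n$ --- which is exactly the structure encoded in the Fox-Wright definition.
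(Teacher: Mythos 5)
Your proposal is correct and follows essentially the same route as the paper's own proof: term-by-term differentiation of the defining series, an index shift $k\to k+1$, and the regrouping $a_n+A_n(k+1)=(a_n+A_n)+A_nk$ to match the definition with shifted first parameters. The only addition is your explicit remark on justifying differentiation under the sum, which the paper takes for granted.
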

\begin{proof}
The proof is by the definition of ${}_p\Psi_q$:
\begin{eqnarray}
& &\frac{d}{d z}{}_p\Psi_q\left[(a_1,A_1),...,(a_p,A_p);(b_1,B_1),...,(b_q,B_q); z\right] = \nonumber\\
& &=\sum_{k=1}^\infty\frac{\prod_{n=1}^p \Gamma(a_n+A_n k)}{\prod_{n=1}^q \Gamma(b_n+B_n k)}\frac{z^{k-1}}{(k-1)!} = \sum_{k=0}^\infty\frac{\prod_{n=1}^p \Gamma(a_n+ A_n+A_n k)}{\prod_{n=1}^q \Gamma(b_n+B_n+B_n k)}\frac{z^k}{k!} = \nonumber \\
& & ={}_p\Psi_q\left[(a_1+A_1,A_1),...,(a_p+A_p,A_p);(b_1+B_1,B_1),...,(b_q+B_q,B_q); z\right]\nonumber
\end{eqnarray}
\end{proof}


\end{document}